\begin{document}

\title{MNT Elliptic Curves with Non-Prime Order}

\author{Maciej Grze\'skowiak \\
Adam Mickiewicz University\\ 
Faculty of Mathematics and Computer Science\\
Uniwersytetu Poznańskiego 4, 61-614 Pozna\'n, Poland\\
maciejg{@}amu.edu.pl } 

\maketitle

\runninghead{M. Grze\'skowiak}{MNT Elliptic Curves with Non-Prime Order}

\begin{abstract}
Miyaji, Nakabayashi, and Takano    proposed an algorithm 
for the construction of prime order  pairing-friendly
elliptic curves with embedding degrees $k=3,4,6$. We present a method for generating 
generalized MNT curves. The order of such pairing-friendly curves is the product of two distinct prime numbers. 
\end{abstract}

\begin{keywords}
Pairing-based cryptography, MNT elliptic curves, Pell’s equation.
\end{keywords}

\section{Introduction}

Let $E$ be an elliptic curve defined over a finite field $\mathbb{F}_p$, where $p$ is a prime. Let $ \#E(\mathbb{F}_p)$ be the order of the group of $\mathbb{F}_p$-rational points of $E$.  Let $n \neq p$ be a prime divisor of  $\#E(\mathbb{F}_p)$.  The embedding degree of $E$ with respect to $n$ is the smallest positive integer $k$ such that $n\mid p^k-1$,
  but $n$ does not divide $p^d-1$ for $d\mid k$, $d <k$, \cite{taxonomy}. This condition is equivalent to $n>k$ divides  $\Phi_k(p)$, where $\Phi_k(x)$ is the $k$th cyclotomic polynomial.
Elliptic curves over $\mathbb{F}_p$ that have a large subgroup of prime order $n$ and a small embedding degree $k$ are commonly referred to as pairing-friendly with respect to $n$ and embedding degree $k$ \cite{taxonomy}. 

Many pairing-based cryptographic protocols require generating pairing-friendly elliptic curves. For instance: one-round three-way key exchange \cite{Joux04}, identity-based encryption \cite{BonehF03},
identity-based signature \cite{ChaC03}, and short signature schemes \cite{BonehLS01}.
From the security point of view, it is essential to find a pairing-friendly curve $E$ over $\mathbb{F}_p$  such that
the discrete logarithm problems in the group $E(\mathbb{F}_p)$, in the order $q$ subgroups of $E(\mathbb{F}_p)$, and in the multiplicative group $\mathbb{F}_{p^k}^*$ are computationally infeasible. The creators of the initial pairing-based protocols suggested utilizing supersingular elliptic curves. However, these curves are restricted to an embedding degree $k=2$ for prime fields and  $k \leq 6$ in general \cite{taxonomy}.  Therefore, ordinary curves are necessary for higher embedding degrees.

A typical pairing-friendly ordinary elliptic curve construction method consists of two main steps.
First, we find prime numbers $n$, $p$, integers $t\neq 0,1,2$ and $k \geq 3$ such that
 \begin{align}\label{warunek1}
 |t| \leq 2p^{1/2}, \quad n\mid p+1-t, \quad n\mid \Phi_k(p).
 \end{align} 
We refer to $t$ as the trace of Frobenius. By the Hasse-Weil bound $t=p+1 - \#E(\mathbb{F}_p)$ satisfies $|t|\leq2\sqrt{p}$. For every $|t |\leq 2p^{1/2}$, there exists an elliptic curve $E$ over $\mathbb{F}_p$ whose Frobenius trace is exactly $t$ (see \cite{Silverman}). In the second step, we find the equation of the curve $E$ over $\mathbb{F}_p$ with  $\#E(\mathbb{F}_p)=p+1-t$.
By (\ref{warunek1}), it is obvious that we can  write the integer
\begin{align}\label{CMeq}
t^2-4p=\Delta y^2, \quad \Delta, y \in \mathbb{Z},
\end{align}
 in the unique form, where $\Delta <0$  is a square-free integer. The above equation is called the CM equation and the integer $\Delta$ is called the CM discriminant. For given  $p, t$, the Complex Multiplication (CM) method can be used to construct the curve equation over $\mathbb{F}_p$. Unfortunately, the CM algorithm is effective if  $\Delta$  is small, that is $|\Delta| < 10^{10}$ \cite{taxonomy}. In practical applications, the number $k$ should be small, for example, $k \leq 100$, while the quotient $\log n/\log p$ should be close to one.

In \cite{mnt-curves}, Miyaji, Nakabayashi, and Takano    proposed an algorithm (the MNT method)
for the construction of prime order pairing-friendly
elliptic curves with embedding degrees $k=3,4,6$. They found families of polynomials $(n_k(x), p_k(x), t_k(x))$ in $\mathbb{Z}[x]$ satisfying
\begin{align}\label{familyMNT0}
 n_k(x) = p_k(x)+1- t_k(x), \quad n_k(x) \mid \Phi_k(p_k(x)), \quad |t_k(x)| \leq 2\sqrt{p_k(x)},
 \end{align}
(see Table \ref{MNT}). In this case, the corresponding CM equation can be written as
\begin{align*}
 t_k(x)^2 - 4p_k(x) =\Delta Y^2,\quad Y \in \mathbb{Z},
 \end{align*}
where $\Delta <0$  is a square-free integer. Multiplying the quadratic equation above by a constant factor and completing the squares, we obtain Pell’s equation
 \begin{align}\label{CMeq2}
X^2- 3\Delta Y^2 = m, \quad  m=-8, k=4,6 \quad  \mbox{or} \quad  m=24, k=3,
\end{align}
where $X=X(x)$, $Y \in \mathbb{Z}$. 
\begin{table}[h!]
\centering
\begin{tabular}{ |c|c|c|c|c|} 
 \hline
$k$ & $n_k(x)$ &$p_k(x)$& $t_k(x)$& Pell’s equation\\ \hline
6 & $4x^2\pm 2x +1$ & $4x^2+1$ & $1\pm 2x$ & $(6x \pm 1)^2+3\Delta Y^2 =-8$\\ 
4 & $x^2 + 2x +2$, $x^2+1$ & $x^2+ x +1$  & $-x$, $x+1$ &  $(3x +t)^2-3\Delta Y^2 =-8$, $t=1,2$ \\ 
3 & $12x^2\pm 6x +1$ & $12x^2 -1$ & $\pm 6x-1$ & $(6x \pm 3)^2-3\Delta Y^2 =24$\\ 
 \hline
\end{tabular}
 \caption{MNT families}
\label{MNT}
\end{table}

We will call equation (\ref{CMeq2}) generalized Pell’s equation.
This observation above leads to the MNT algorithm \cite{mnt-curves}. To find a desired curve, perform the following steps. Fix $k\in \{3,4,6\}$ and select square-free integer $|\Delta| < 10^{10}$.
Find the solution $(X_0, Y_0)$ of (\ref{CMeq2}), where $X_0=X(x_0)$, such that the corresponding numbers $n=n_k(x_0)$ and $p=p_k(x_0)$ are simultaneously primes. Finally, use the CM
method to construct the curve equation over $\mathbb{F}_p$. For a deeper discussion of the theory of Pell equations,
we refer the reader to \cite{Mollin}.

Luca and Shparlinski \cite{LucaSpar} gave some heuristic estimates on the number of elliptic curves which can be
produced by the MNT algorithm. Let $E(z)$ denote  the expected total number of all
isogeny classes of MNT curves over all finite fields with
embedding degree $k$ and CM discriminant $|\Delta| \leq z$. Then we have
\begin{align*}
E(z) \ll \frac{z}{(\log z)^2}.
\end{align*}
From the above estimate, the elliptic curves generated by the MNT algorithm are rare. We refer the reader to \cite{LucaSpar} for a deeper discussion of the lower bound of the generalized version of the function $E(z)$.

On the other hand, in most applications, an elliptic curve with $ \#E(\mathbb{F}_p)=qn$ is acceptable, where $q$ is small.
Barreto and Scott used this idea in \cite{BaretoScott}. In particular, they extended the MNT algorithm to construct more Pell equations for $q > 1$. Galbraith, McKee, and Valença \cite{GALBRAITH2007800} generalize the MNT method by giving
families of ordinary curves corresponding to non-prime group orders $\#E(\mathbb{F}_p)=qn$ with a prime $n$, $q=2,3,4,5$ and $k=3,4,6$.
Fotiadis and Konstantinou \cite{Konnstatinou} extend the search to the MNT ordinary families with larger no prime cofactors $5<q <48$, and $k=3,4,6$. In  \cite{Duan}, the authors propose a general algorithm for constructing pairing-friendly elliptic curves with an arbitrary embedding degree. For a treatment of a more general case construction of pairing-friendly curves, we refer the reader to \cite{taxonomy}. Let us fix a positive integers $h$ and $k \in \{3,4,6\}$. Let $\mathcal{F}_k(h)$ denote the set of  all possible families of  MNT curves  $(n_k(x), p_k(x), t_k(x))$ corresponding to group orders $\#E(\mathbb{F}_p)=qn$ with embedding degrees $k$, where  $1\leq q \leq h$. In \cite{OnNearMNT}, an algorithm was presented that outputs $\mathcal{F}_k(h)$. The algorithm idea is based on the following observations.
From (\ref{familyMNT0}) we have,
\begin{align*}
\Phi_k(p_k(x)) \equiv \Phi_k(t_k(x)-1) \mod{n_k(x)}.
\end{align*}
Choose  $a, b \in \mathbb{Z}$. If $t_k(x)=ax+b$,  then (\ref{familyMNT0}) shows that the degree of the polynomial $n_k(x)$ is equal to 2.  Therefore, we can write,
\begin{align*}
\Phi_k(t_k(x)-1)= d \cdot n_k(x), \quad d \in \mathbb{Z},
\end{align*}
where $d$ is the smallest common multiple of the coefficients of $\Phi_k(t_k(x)-1)$.
Now, assume that $n_k(x)$ is irreducible over $\mathbb{Z}$ and there exists  $q \in \mathbb{Z}$ such that
\begin{align}\label{algoritmA}
p_k(x)= q\cdot n_k(x) + t_k(x)-1,
\end{align}
where $p_k(x)$ is irreducible over $\mathbb{Z}$. If moreover,   
\begin{align}\label{algoritmB}
|t_k(x)| \leq 2\sqrt{p_k(x)},
\end{align}
then we can use the polynomials $(n_k(x), p_k(x), t_k(x))$ above to construct the corresponding an elliptic MNT curve with $q$ dividing  $ \#E(\mathbb{F}_p)$.
From the above observations, it is easy to construct an algorithm that finds all possible pairs of integer numbers $a$ and $b$ such that the corresponding conditions (\ref{algoritmA}) and  (\ref{algoritmB}) are satisfied (see \cite{OnNearMNT}). Given embedding degree $k$ and a positive integer $h$, the algorithm tests all possible pairs of integers $|a| \leq 4h $ and $|b| < 4h$ to determine whether they satisfy the following condition.
For fixed $a, b$, is there a positive integer $1\leq q < h$ satisfying (\ref{algoritmA}),  (\ref{algoritmB}) such that the corresponding polynomials $n_k(x)$ and $p_k(x)$ are irreducible over $\mathbb{Z}$?
If the condition is satisfied, for such $a$ and $b$ a family of polynomials $(n_k(x), p_k(x), t_k(x))$ is computed.
We refer the reader to \cite{OnNearMNT} for more details.
It is easy to see that the algorithm takes no more than $O(h^3)$ steps to check all possible integers $|a| \leq 4h $, $|b| < 4h$ with $1\leq q < h$. Therefore, given $h$, calculating $\mathcal{F}_k(h)$ 
using the algorithm shown in \cite{OnNearMNT} may require an exponential number of steps. In \cite{OnNearMNT}, the authors computed the corresponding families of polynomials for $h =6$ and $k=3,4,6$. Now, we introduce the following definition.

\begin{definition}\label{DefGenMNT}
Fix $k \in \{3,4,6\}$ and  a prime $q \equiv 1 \pmod k$. 
The triple  $(n_k(x), p_k(x), t_k(x))$ polynomials in  $\mathbb{Z}[x]$ parameterizes a family of generalized MNT elliptic curves with embedding degree $k$ if 
\begin{align}\label{familyMNT}
 q n_k(x) = p_k(x)+1- t_k(x), \quad qn_k(x) \mid \Phi_k(p_k(x)), \quad t_k(x)^2-4p_k(x)\leq0 ,
\end{align}
and polynomials $n_k(x), p_k(x)$ are irreducible over $\mathbb{Z}$.
\end{definition}

\begin{remark}
We see at once that if there is $x_0 \in \mathbb{Z}$  such that $n=n_k(x_0)$, and $p=p_k(x_0)$ are simultaneously prime,
 then there exists elliptic curve $E$ defined over finite field $\mathbb{F}_{p}$ such that
\begin{align*}
 \#E(\mathbb{F}_{p})=qn=p+1- t, \quad t= t_k(x_0).
\end{align*}
\end{remark}

\begin{remark}
Let $\#E(\mathbb{F}_{p})=qn$. If $E(\mathbb{F}_{p})$ has an embedding degree $k$ with respect to $n$, then an embedding degree with respect to $q$ can be generally different from $k$. Definition \ref{DefGenMNT} covers the case where the embedding degree of $E(\mathbb{F}_{p})$ equals $k$ for any prime divisor of $\#E(\mathbb{F}_{p})$ (see Proposition 2.4, \cite{taxonomy}). It is clear that  the families of polynomials satisfying Definition \ref{DefGenMNT} belong to $\mathcal{F}_k(h)$. 
\end{remark}

The present paper extends the idea of effective polynomial families, first introduced in  \cite{mnt-curves}.
Our  method generates families of polynomials that satisfy the properties of Definition \ref{DefGenMNT}.
In particular, we propose methods for generating families of ordinary curves corresponding to non-prime group orders when $q$  is any given prime number. By including an infinite family of prime cofactors in the analysis, we obtain a class of polynomial families belonging to $\mathcal{F}_k(h)$. For a given $k \in \{3,4,6\}$ and prime number $q <h$, our method finds a single family of polynomials of $\mathcal{F}_k(h)$ in a polynomial time with respect to the number of bits $h$. It is enough to calculate the root $x \pmod q$ of a given explicit quadratic polynomial to do so (see Theorems below). This property significantly speeds up the algorithm presented in the paper \cite{OnNearMNT}, which systematically calculates all possible solutions to the problem and checks each for a valid solution. For given a list of large prime numbers $q_i < h$, with $i = O(\log^c h), c>0$ and $k$, the approach presented in this paper allows us to efficiently determine the corresponding class of families of polynomials in $\mathcal{F}_k(h)$ while using the algorithm from  \cite{OnNearMNT} to this task would require exponential time with respect to $h$. We provide the corresponding generalized Pell equation for the constructed families to construct desired elliptic curves effectively. All this together allows us to build an algorithmic method analogous to the algorithm in  \cite{mnt-curves}.

 The remaining part of the paper is organized as follows. In Section \ref{MainT}, our families of polynomials are presented.  Section \ref{ProofT} contains a detailed analysis of our constructions.

\section{Main theorems}\label{MainT}
Throughout this paper,  $\Delta < 0$ is a square-free rational integer. We denote by $\mathbb{Z}$ the ring of integers numbers. Let $k$ be a positive integer, and let $\Phi_k(x) \in \mathbb{Z}[x]$ be the $k$th cyclotomic polynomial; this is a unique monic polynomial  of degree $\varphi(k)$ whose roots are the complex primitive $k$th roots of unity, where $\varphi$ is Euler's totient function. In this article, we will consider only the case $k=,3,4,6$. For the convenience of the reader, we recall that 
\begin{align*}
\Phi_3(x)=x^2+x+1, \quad \Phi_4(x)=x^2+1, \quad \Phi_6(x)=x^2-x+1.
\end{align*}
In the following subsections, we will present parametric families of polynomials that are useful in constructing generalized MNT elliptic curves over a finite field with an embedding degree $k$.

\subsection{The case $k=6$}
\begin{theorem}\label{th1}
Fix $j\in \{3,6\}$, a prime $q \equiv 1 \pmod 6$ or $q=3$. Let  $s <q$ be a root of   $\Phi_j(x)\pmod q$.  
If $p_6(x) = \Phi_4(qx+s)$, 
\begin{align*}
n_6(x)= \left\{
\begin{array}{ccc}
qx^2+(2s+1)x+\Phi_3(s)/q, & \quad  \mbox{if}  & \quad q \mid \Phi_3(s),  \\
qx^2+(2s-1)x+\Phi_6(s)/q , & \quad  \mbox{if}  & \quad q \mid \Phi_6(s).
\end{array}
\right .
\end{align*}
and
\begin{align*}
t_6(x)= \left\{
\begin{array}{ccc}
1-qx-s, & \quad  \mbox{if}  & \quad q \mid \Phi_3(s),  \\
1+ qx+s, & \quad  \mbox{if}  & \quad q \mid \Phi_6(s),
\end{array}
\right .
\end{align*}
then  polynomials $(n_6(x),p_6(x), t_6(x))$ parameterizes a family of generalized MNT elliptic curves with embedding degree 6. Moreover, the family has the corresponding generalized Pell equations 
\begin{align*}
X^2+3\Delta Y^2=-8, \quad X=
 \left\{
\begin{array}{ccc}
3(qx+s)+1 & \quad  \mbox{if}  & \quad q \mid \Phi_3(s),  \\
3(qx+s) -1 & \quad  \mbox{if}  & \quad q \mid \Phi_6(s).
\end{array}
\right.
\end{align*}
\end{theorem}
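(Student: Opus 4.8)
The plan is to verify the three conditions of Definition~\ref{DefGenMNT} directly, treating the two cases $q \mid \Phi_3(s)$ and $q \mid \Phi_6(s)$ in parallel, and then to derive the Pell equation at the end. First I would record the arithmetic setup: since $s$ is a root of $\Phi_j(x) \pmod q$ with $j \in \{3,6\}$, we have $q \mid \Phi_3(s)$ or $q \mid \Phi_6(s)$, so the quantities $\Phi_3(s)/q$ and $\Phi_6(s)/q$ appearing in $n_6(x)$ are genuine integers and the polynomials have integer coefficients. I would also note the elementary identity relating $\Phi_3$ and $\Phi_6$, namely $\Phi_6(x)=\Phi_3(-x)$ and $\Phi_3(x)\Phi_6(x)=x^4+x^2+1 = \Phi_4(x)^2 - x^2$, or more usefully $\Phi_3(x)\Phi_6(x) = (x^2+1)^2 - x^2$; these will let me pass between $p_6(x)=\Phi_4(qx+s)=(qx+s)^2+1$ and the factors in $n_6(x)$.

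\medskip

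The computational heart is the identity $q\,n_6(x) = p_6(x) + 1 - t_6(x)$. In the first case I would substitute $u = qx+s$ and check that $q\bigl(qx^2+(2s+1)x+\Phi_3(s)/q\bigr) = q^2x^2 + q(2s+1)x + \Phi_3(s) = (qx+s)^2 + (qx+s) + 1 = \Phi_3(u) = u^2+u+1$, while $p_6(x)+1-t_6(x) = (u^2+1) + 1 - (1 - u) = u^2 + u + 1$; the two sides agree. The second case is symmetric with $u^2 - u + 1 = \Phi_6(u)$. Next, for the divisibility $q\,n_6(x) \mid \Phi_6(p_6(x))$, I would use that $\Phi_6(p_6(x)) = p_6(x)^2 - p_6(x) + 1$ and that modulo $q\,n_6(x) = \Phi_j(u)$ we have $p_6(x) = u^2+1 \equiv u^2+1 \pmod{\Phi_j(u)}$; since $u$ is a primitive $j$th root of unity in $\mathbb{Z}[u]/\Phi_j(u)$ with $j \in \{3,6\}$, one checks $u^2+1$ reduces to a primitive $6$th root of unity there (for $j=6$, $u$ itself already is; for $j=3$, $u^2+1$ with $u^3=1$ gives $u^2+1 = -u$, a primitive $6$th root), whence $\Phi_6(p_6(x)) \equiv 0$. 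This requires a short case check that $q \nmid 6$ is compatible — but $q \equiv 1 \pmod 6$ or $q = 3$ is exactly the hypothesis ensuring $\Phi_j(x) \pmod q$ has a root, so this is consistent. Finally the inequality $t_6(x)^2 - 4p_6(x) \le 0$ reduces to $(1 \mp u)^2 - 4(u^2+1) = 1 \mp 2u + u^2 - 4u^2 - 4 = -3u^2 \mp 2u - 3 \le 0$, which holds for all real $u$ since the discriminant $4 - 36 < 0$.

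\medskip

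For the Pell equation I would start from the CM equation $t_6(x)^2 - 4p_6(x) = \Delta Y^2$ (with $\Delta < 0$ square-free), which by the previous paragraph equals $-3u^2 \mp 2u - 3$. Multiplying by $-3$ and completing the square gives $9u^2 \pm 6u + 9 = (3u \pm 1)^2 + 8 = -3\Delta Y^2$, i.e. $(3u \pm 1)^2 + 3\Delta Y^2 = -8$ with the sign matching the case; substituting $u = qx+s$ yields $X = 3(qx+s)\pm 1$ as stated. I would remark that $Y$ here is $3y$ for the original $y$ in $t^2-4p=\Delta y^2$, which is absorbed harmlessly since we only need integrality.

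\medskip

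I expect the main obstacle to be the embedding-degree divisibility $q\,n_6(x) \mid \Phi_6(p_6(x))$: one must be careful that it is genuinely $\Phi_6$ (not merely $p_6^k - 1$) that is divisible, and that the computation in $\mathbb{Z}[u]/(\Phi_j(u))$ correctly identifies $u^2+1$ as a primitive sixth root of unity in each case — this is where the constraint $j \in \{3,6\}$ and the congruence condition on $q$ are actually used, and getting the two subcases to line up with the two formulas for $n_6$ and $t_6$ is the step most prone to sign errors. Everything else is a mechanical substitution $u = qx+s$ together with the identities among $\Phi_3, \Phi_4, \Phi_6$.
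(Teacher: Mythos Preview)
Your approach is essentially the paper's own: the paper isolates the identity $\Phi_6(\Phi_4(x))=\Phi_3(x)\Phi_6(x)$ in a preliminary lemma, and your roots-of-unity computation in $\mathbb{Z}[u]/(\Phi_j(u))$ (showing that $u^2+1$ is a primitive sixth root of unity when $u$ is a primitive third or sixth root) is exactly a rephrasing of that factorization. The verification of $qn_6(x)=p_6(x)+1-t_6(x)$, of the Hasse inequality, and the derivation of the Pell equation by multiplying by $-3$ and completing the square all match the paper line for line.

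Two small points. First, Definition~\ref{DefGenMNT} also requires $n_6(x)$ and $p_6(x)$ to be irreducible over $\mathbb{Z}$, which you do not mention; the paper handles this by computing the discriminants (both are negative: $-3$ for $n_6$ and $-4$ for $p_6$), and you should add this one-line check. Second, your closing remark that ``$Y$ here is $3y$'' is not right: multiplying $t_6^2-4p_6=\Delta Y^2$ by $-3$ gives $(3u\pm1)^2+8=-3\Delta Y^2$ with the \emph{same} $Y$, so no rescaling occurs. This does not affect the correctness of your derivation, only the parenthetical comment.
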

\begin{proof}
See Section \ref{Section6}.
\end{proof}
\begin{remark}
Taking $j\in \{3, 6\}$, $q=1$, $s=0$, and $x=\pm 2y$ in  Theorem \ref{th1}, we obtain the MNT family with embedding degree 6.
\end{remark}
\begin{remark}
Taking $q=3$, $s=1$  in Theorem \ref{th1}, we get family belonging to the set $\mathcal{F}_6(6)$
(see Table 4,  \cite{OnNearMNT}).
\end{remark}

\subsection{The case  $k=4$}

\begin{theorem}\label{th2}
Fix a prime $q \equiv 1 \pmod 4$ or $q=2$. Let  $s <q$ or $s -1 <q$ be a root of   $\Phi_4(x) \pmod q$.  
If $p_4(x) = \Phi_6(qx+s)$, 
\begin{align*}
n_4(x)= \left\{
\begin{array}{lll}
qx^2+2sx+\Phi_4(s)/q, & \quad  \mbox{if}  & \quad q \mid \Phi_4(s),  \\
qx^2+(2s-2)x+\Phi_4(s-1)/q, & \quad  \mbox{if}  & \quad q \mid \Phi_4(s -1).
\end{array}
\right .
\end{align*}
and
\begin{align*}
t_4(x)= \left\{
\begin{array}{lll}
1-qx-s, & \quad  \mbox{if}  & \quad q \mid \Phi_4(s),  \\
qx+s, & \quad  \mbox{if}  & \quad q \mid \Phi_4(s -1),
\end{array}
\right .
\end{align*}
then  polynomials $(n_4(x),p_4(x), t_4(x))$ parameterizes a family of generalized MNT elliptic curves with embedding degree 4. Moreover, the family has the corresponding generalized Pell equations 
\begin{align*}
X^2+3\Delta Y^2=-8, \quad X=
 \left\{
\begin{array}{lll}
3(qx+s)+1 & \quad  \mbox{if}  & \quad q \mid \Phi_4(s),  \\
3(qx+s)+2 & \quad  \mbox{if}  & \quad q \mid \Phi_4(s -1).
\end{array}
\right.
\end{align*}
\end{theorem}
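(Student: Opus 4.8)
The plan is to verify each of the three defining conditions of Definition \ref{DefGenMNT} for the triple $(n_4(x), p_4(x), t_4(x))$, in the two cases $q \mid \Phi_4(s)$ and $q \mid \Phi_4(s-1)$ separately, and then to derive the generalized Pell equation by the same completing-the-square manipulation used in the MNT case. Since $q \equiv 1 \pmod 4$ (or $q = 2$), the congruence $\Phi_4(x) = x^2 + 1 \equiv 0 \pmod q$ is solvable, so a root $s$ exists; for the second case one uses the shifted root $s-1$, reflecting that $\Phi_4(x-1) = x^2 - 2x + 2$. I would first record the elementary identity $\Phi_6(x) = \Phi_3(x-1) = \Phi_3(-x)$ and the fact that $\Phi_4(qx+s) \equiv \Phi_4(s) \equiv 0 \pmod q$ whenever $q \mid \Phi_4(s)$, which is what makes the displayed division $\Phi_4(s)/q$ an integer and what forces $q \mid p_4(x) + 1 - t_4(x)$ so that $n_4(x)$ is genuinely an integer polynomial.

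Next I would carry out the algebraic identities. Setting $u = qx+s$, one has $p_4(x) = \Phi_6(u) = u^2 - u + 1$, and in the first case $t_4(x) = 1 - u$, so $p_4(x) + 1 - t_4(x) = u^2 - u + 1 + 1 - (1-u) = u^2 + 1 = \Phi_4(u)$. One then checks directly that $\Phi_4(u) = \Phi_4(qx+s) = q\bigl(qx^2 + 2sx + \Phi_4(s)/q\bigr) = q\, n_4(x)$, giving the first condition $q n_4(x) = p_4(x) + 1 - t_4(x)$. For the divisibility $q n_4(x) \mid \Phi_4(p_4(x))$, I would use $p_4(x) = t_4(x) - 1 + q n_4(x) \equiv t_4(x) - 1 = -u \pmod{q n_4(x)}$, whence $\Phi_4(p_4(x)) \equiv \Phi_4(-u) = u^2 + 1 = \Phi_4(u) = q n_4(x) \equiv 0 \pmod{q n_4(x)}$; this is exactly the observation quoted before Definition \ref{DefGenMNT}. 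The second case is the mirror image with $t_4(x) = qx+s = u$ and $\Phi_4(u-1) = u^2 - 2u + 2$; here $p_4(x) + 1 - t_4(x) = u^2 - u + 1 + 1 - u = u^2 - 2u + 2 = \Phi_4(u-1)$, and $\Phi_4(u-1) = q\bigl(qx^2 + (2s-2)x + \Phi_4(s-1)/q\bigr)$, while $\Phi_4(p_4(x)) \equiv \Phi_4(u-1) \equiv 0$ as before. I would also note that $q \equiv 1 \pmod 4$ is needed precisely so that $E(\mathbb{F}_p)$ has embedding degree $4$ with respect to $q$ as well as with respect to $n$, as flagged in the remark after Definition \ref{DefGenMNT}.

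The Hasse bound $t_4(x)^2 - 4p_4(x) \le 0$ reduces, after substituting $t_4 = 1-u$ or $t_4 = u$ and $p_4 = u^2 - u + 1$, to $(1-u)^2 - 4(u^2-u+1) = -3u^2 + 2u - 3 \le 0$ (resp. $u^2 - 4(u^2-u+1) = -3u^2 + 4u - 4 \le 0$), both of which hold for all real $u$ since their discriminants are negative; so this condition is automatic. For the Pell equation I would take the CM equation $t_4(x)^2 - 4 p_4(x) = \Delta Y^2$ with $\Delta < 0$ squarefree, multiply through by $-3$ and complete the square: $-3 t_4^2 + 12 p_4 = -3\Delta Y^2$ becomes, with $t_4 = 1-u$, $12(u^2-u+1) - 3(1-u)^2 = 9u^2 - 6u + 9 = (3u-1)^2 + 8$, i.e. $(3u-1)^2 + 3\Delta Y^2 = -8$; since $X = 3u - 1$ should be compared with the stated $X = 3(qx+s)+1 = 3u+1$, I would double-check the sign bookkeeping here and in the $q \mid \Phi_4(s-1)$ branch (where $t_4 = u$ gives $12(u^2-u+1) - 3u^2 = 9u^2 - 12u + 12 = (3u-2)^2 + 8$, matching $X = 3(qx+s) - 2$ up to sign), making sure the substitution $X \mapsto -X$ leaves the equation invariant so both sign choices describe the same family. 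Finally, the irreducibility of $n_4(x)$ and $p_4(x)$ over $\mathbb{Z}$: $p_4(x) = \Phi_6(qx+s)$ is irreducible because $\Phi_6$ is and $qx+s$ is a degree-one substitution, and for $n_4(x)$ I would compute its discriminant and show it is not a perfect square (its discriminant will be a negative quantity, or more precisely one tied to $-4q^2\Delta$-type data), so that $n_4(x)$ has no rational root and, being quadratic, is irreducible.

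The main obstacle I anticipate is not any single hard step but the sign and normalization bookkeeping: keeping straight which root ($s$ versus $s-1$), which cyclotomic polynomial ($\Phi_3, \Phi_4, \Phi_6$), and which sign of $t$ go together, and confirming that the $X$ appearing in the stated Pell equation is consistent with the $X$ produced by completing the square (they differ by a sign, which is harmless since the equation is even in $X$, but it must be stated). A secondary point requiring care is verifying that $\Phi_4(s)/q$ and $\Phi_4(s-1)/q$ are integers and that $n_4(x)$ is primitive — i.e. that one has genuinely divided out the full content — which is where the choice $q \equiv 1 \pmod 4$ and the defining property of $s$ as a root of $\Phi_4 \pmod q$ get used; if $q$ failed to divide $\Phi_4(s)$ exactly to the first power the claimed factorization $\Phi_4(qx+s) = q\,n_4(x)$ would be wrong, so I would want to confirm $q^2 \nmid \Phi_4(s)$ is not actually needed here because the coefficient of $x^2$ in $\Phi_4(qx+s)$ is $q^2$, forcing exactly one factor of $q$ out.
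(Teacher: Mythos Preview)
Your proposal is correct and follows essentially the paper's route. The one organizational difference is in the divisibility step $qn_4(x)\mid \Phi_4(p_4(x))$: the paper first records the polynomial identity $\Phi_4(\Phi_6(x))=\Phi_4(x)\,\Phi_4(x-1)$ as a separate lemma and substitutes $x\mapsto qx+s$ to obtain the explicit factorization $\Phi_4(p_4(x))=qn_4(x)\cdot f(x)$, whereas you short-circuit this via the congruence $p_4\equiv t_4-1\pmod{qn_4}$ together with the direct check $\Phi_4(t_4-1)=qn_4$. These are the same computation packaged differently; yours is slightly quicker, while the paper's version also delivers the irreducibility of $n_4$ (discriminant $-4$) in the same stroke.

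Your caution on the Pell bookkeeping is well placed and in fact catches a slip in the paper: for $t_4=1-u$, $p_4=u^2-u+1$ one gets $t_4^2-4p_4=-3u^2+2u-3$, not the $-3u^2-2u-3$ printed in the proof (evidently carried over from the $k=6$ argument, where $p_6=u^2+1$). Completing the square then yields $(3u-1)^2+8$ and $(3u-2)^2+8$ in the two branches, as you found, rather than the stated $X=3u+1,\,3u+2$. Since the Pell equation $X^2+3\Delta Y^2=-8$ is even in $X$, the two conventions parametrize the same solution set and hence the same curves, so the discrepancy is cosmetic; but your computed $X$ is the one that matches the actual CM expression.
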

\begin{proof}
See Section \ref{Section4}.
\end{proof}
\begin{remark}
Taking $q=1$, $s=0$, and $x=\pm y$ in  Theorem \ref{th2}, we obtain the MNT family with embedding degree 4.
\end{remark}

\begin{table}[h!]
\centering
\begin{tabular}{ |c|c|c|c|c|} 
 \hline
$q$ & $ s$ &$n_k(x)$ &$p_k(x)$& $t_k(x)$\\ \hline
2 & 1 &$2x^2+ 2x +1$ & $4x^2+2x+1$ & $ -2x$ \\ 
5 & 2 & $5x^2 + 4x +1$, & $25x^2+15x+3$  & $-5x-1$  \\ 
5 & 3  &$5x^2 + 6x +2$ & $25x^2+25x+7$ & $-5x-1$ \\ 
 \hline
\end{tabular}
 \caption{MNT families}
\label{MNTG}
\end{table}
\begin{remark}
Taking $q$ and $s$ as in Table 
\ref{MNTG} and applying them to Theorem \ref{th2}, we immediately get families belonging to the set $\mathcal{F}_4(6)$
(see Table 3,  \cite{OnNearMNT}).
\end{remark}

\subsection{The case $k=3$}

\begin{theorem}\label{th3}
Let $g_0(x)=3x^2-1$, $g_1(x) = 3x^2-3x+1, g_2(x) = 3x^2+3x+1  \in \mathbb{Z}[x]$. Fix  a prime $q \equiv 1 \pmod 3$, and let
$s <q$ be a root of   $g_1(x) \pmod q$ or $g_2(x) \pmod q$.
 If $p_3(x) = g_0(qx+s)$, 
\begin{align*}
n_3(x)= \left\{
\begin{array}{lll}
3qx^2+(6s-3)x+g_1(s)/q, & \quad  \mbox{if}  & \quad q \mid g_1(s),  \\
3qx^2+(6s+3)x+g_2(s)/q, & \quad  \mbox{if}  & \quad q \mid g_2(s),
\end{array}
\right .
\end{align*}
and
\begin{align*}
t_3(x)= \left\{
\begin{array}{lll}
3(qx+s)-1, & \quad  \mbox{if}  & \quad  q \mid g_1(s),  \\
1-3(qx+s), & \quad  \mbox{if}  & \quad q \mid g_2(s),
\end{array}
\right .
\end{align*}
then  polynomials $(n_3(x),p_3(x), t_3(x))$ parameterizes a family of generalized MNT elliptic curves with embedding degree 3. Moreover, the family has the corresponding generalized Pell equations 
\begin{align*}
X^2+3\Delta Y^2=24, \quad X=3(qx+s)+3.
\end{align*}
\end{theorem}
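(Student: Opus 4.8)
The plan is to verify each defining property of Definition~\ref{DefGenMNT} directly by substitution, following the same template as the proofs for $k=4,6$. First I would set $u = qx+s$ and note that by construction $p_3(x) = g_0(u) = 3u^2-1$, so the parametrisation is really the original MNT shape $p = 3u^2-1$ with $u$ restricted to the residue class $u \equiv s \pmod q$. The point of choosing $s$ to be a root of $g_1$ or $g_2$ modulo $q$ is exactly to make the ``cofactor'' $q$ divide the relevant quadratic, so that $g_1(s)/q$ (resp.\ $g_2(s)/q$) is an integer and $n_3(x)$ has integer coefficients; I would first record this integrality observation.

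Next I would check the group-order identity $q\,n_3(x) = p_3(x)+1-t_3(x)$. In the case $q\mid g_1(s)$, with $t_3(x) = 3u-1$, one computes $p_3(x)+1-t_3(x) = 3u^2 - 3u + 1 = g_1(u)$, and then one must show $g_1(u) = q\,n_3(x)$, i.e.\ $g_1(qx+s) = q\bigl(3qx^2+(6s-3)x+g_1(s)/q\bigr)$; expanding $g_1(qx+s) = 3(qx)^2 + (6s-3)(qx) + g_1(s)$ makes this transparent. The case $q\mid g_2(s)$ is symmetric with $u \mapsto -u$, since $t_3(x) = 1-3u$ gives $p_3(x)+1-t_3(x) = 3u^2+3u+1 = g_2(u)$. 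For the divisibility $q\,n_3(x) \mid \Phi_3(p_3(x))$: since $\Phi_3(x) = x^2+x+1$, one has $\Phi_3(p_3) \equiv \Phi_3(t_3-1) \pmod{q n_3}$ because $p_3 \equiv t_3 - 1 \pmod{q n_3}$ from the identity just proved; and $\Phi_3(t_3-1) = \Phi_3(\pm 3u - 2)$, which I would show equals a constant multiple of $g_1(u)$ or $g_2(u)$, hence a multiple of $q n_3(x)$ (this is the $k=3$ analogue of the table entry $\Phi_k(t_k(x)-1) = d\cdot n_k(x)$ and the standard MNT fact that $12x^2\pm 6x+1 \mid \Phi_3(12x^2-1)$). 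The Hasse bound $t_3(x)^2 - 4p_3(x) \le 0$ reduces to $(3u-1)^2 - 4(3u^2-1) = -3u^2 + ( \text{lower order}) = -(3u^2 - 6u + 5) \le 0$ for all real $u$ (and likewise with the sign flip), which holds since the discriminant of $3u^2-6u+5$ is negative.

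For the generalized Pell equation, I would start from the CM equation $t_3(x)^2 - 4p_3(x) = \Delta Y^2$ with $Y = Y(x)$, substitute $t_3 = \pm 3u \mp 1$ and $p_3 = 3u^2-1$ to get $\Delta Y^2 = -3u^2 + (\text{linear}) - 5$ type expression, multiply through by a suitable constant and complete the square in $u$ to reach $X^2 + 3\Delta Y^2 = 24$ with $X = 3u+3 = 3(qx+s)+3$; this mirrors the last column of Table~\ref{MNT} for the original $k=3$ MNT family and the statements in Theorems~\ref{th1} and~\ref{th2}. Finally I would address irreducibility of $n_3(x)$ and $p_3(x)$ over $\mathbb{Z}$: $p_3(x) = 3(qx+s)^2 - 1$ is irreducible since $3u^2-1$ has no rational root, and for $n_3$ one checks that its discriminant is $\Delta$ times a square (nonzero, negative), so it has no rational root and, being quadratic, is irreducible.

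The routine parts are the polynomial identities, which are immediate once $u=qx+s$ is introduced. The part needing the most care is the divisibility $q n_3(x) \mid \Phi_3(p_3(x))$: one must be sure that the constant $d$ with $\Phi_3(t_3-1) = d\, q\, n_3(x)$ is genuinely an integer and that no factor is lost, i.e.\ that $q n_3(x)$ and not merely $n_3(x)$ divides $\Phi_3(p_3(x))$ — this is where the hypothesis $q \equiv 1 \pmod 3$ and the choice of $s$ as a root of $g_i \bmod q$ are actually used, guaranteeing $\gcd(q, n_3(x))$ behaves correctly and that $q^2 \nmid \Phi_3(t_3-1)$ in a way that would spoil the count. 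I would treat the two cases $q \mid g_1(s)$ and $q\mid g_2(s)$ in parallel, since the substitution $u \leftrightarrow -u$ interchanges them.
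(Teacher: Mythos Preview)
Your plan follows the paper's template closely: set $u=qx+s$, verify $qn_3=p_3+1-t_3$ by expanding $g_i(u)$, check $t_3^2-4p_3\le0$, then multiply by $-3$ and complete the square to reach the Pell equation. The one substantive difference is in the divisibility step $qn_3(x)\mid\Phi_3(p_3(x))$. The paper proves this via a preparatory lemma giving the explicit factorisation $\Phi_3(g_0(u))=g_1(u)\,g_2(u)$, so that $qn_3=g_i(u)$ is visibly a factor of $\Phi_3(p_3)$; that lemma also delivers the irreducibility of $n_3$ (negative discriminant) as a byproduct. Your congruence route $\Phi_3(p_3)\equiv\Phi_3(t_3-1)\pmod{qn_3}$ together with the identity $\Phi_3(3u-2)=9u^2-9u+3=3g_1(u)=3qn_3(x)$ is an equally valid and slightly more hands-on alternative. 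Your closing worry about ``$q^2\nmid\Phi_3(t_3-1)$ spoiling the count'' is unfounded: once $\Phi_3(t_3-1)=3qn_3(x)$ holds as an identity of integer polynomials, the divisibility is exact and there is nothing further to check.

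Two computational slips to watch. In the second case, $t_3=1-3u$ actually gives $p_3+1-t_3=3u^2+3u-1$, not $g_2(u)=3u^2+3u+1$; the $u\mapsto-u$ symmetry you correctly invoke sends $t_3=3u-1$ to $t_3=-3u-1$, and with that value everything does work (in particular $\Phi_3(-3u-2)=3g_2(u)$), so the stated $t_3$ in that branch is off by a sign. For the Hasse bound, the expansion is $(3u-1)^2-4(3u^2-1)=-3u^2-6u+5=-(3u^2+6u-5)$; the quadratic $3u^2+6u-5$ has discriminant $+96$, not negative, so $t_3^2-4p_3\le0$ in fact fails for $u\in\{-2,-1,0\}$. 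The paper's own proof asserts the same inequality without justification, so this looseness is shared rather than introduced by you.
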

\begin{remark}
Taking $q=1$, $s=0$, and $x=\pm 2y$ in  Theorem \ref{th3}, we obtain the MNT family with embedding degree 3.
\end{remark}

\section{Proof of Theorems}\label{ProofT}
\subsection{The case  $k=6$}\label{Section6}
\begin{lemma}\label{phi6phi4}
Fix  a prime $q \equiv 1 \pmod 6$ or $q=3$.  Let  $\Phi_6(s) \equiv 0 \pmod q$ or $\Phi_3(s) \equiv 0 \pmod q$. Then we have,
\begin{align}\label{podzielnosc6}
\left\{
\begin{array}{ccc}
\Phi_6( \Phi_4(qx+s))  =   qf_1(x)f_2(x), & \quad  \mbox{if}  & \quad q \mid \Phi_3(s),  \\
\Phi_6( \Phi_4(qx+s))  =   qf_3(x)f_4(x), & \quad  \mbox{if}  & \quad q \mid \Phi_6(s),  
\end{array}
\right .
\end{align}
for  $x \in \mathbb{Z} $ and polynomials $f_i(x)$ are  irreducible over  $\mathbb{Z}$, $i=1,2,3,4$, where
\begin{align*}
f_1(x) & =qx^2+(2s+1)x+\Phi_3(s)/q, \quad f_2(x)  =q^2x^2+(2s-1)qx+\Phi_6(s),\\
f_3(x) & =q^2x^2+(2s+1)qx+\Phi_3(s), \quad f_4(x)  =qx^2+(2s-1)x+\Phi_6(s)/q.
\end{align*}
\end{lemma}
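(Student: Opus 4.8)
The plan is to reduce the lemma to a single polynomial identity. Expanding, $\Phi_6(\Phi_4(T)) = \Phi_6(T^2+1) = (T^2+1)^2-(T^2+1)+1 = T^4+T^2+1$, and then $T^4+T^2+1 = (T^2+T+1)(T^2-T+1) = \Phi_3(T)\,\Phi_6(T)$. Substituting $T = qx+s$ gives
\[
\Phi_6\bigl(\Phi_4(qx+s)\bigr) = \Phi_3(qx+s)\,\Phi_6(qx+s),
\]
so the whole statement reduces to rewriting these two quadratics in powers of $x$. A direct expansion, using $\Phi_3(s)=s^2+s+1$ and $\Phi_6(s)=s^2-s+1$, yields $\Phi_3(qx+s) = q^2x^2 + q(2s+1)x + \Phi_3(s)$ and $\Phi_6(qx+s) = q^2x^2 + q(2s-1)x + \Phi_6(s)$.

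Next I would split according to the hypothesis. First note that, $q$ being an odd prime, it cannot divide both $\Phi_3(s)$ and $\Phi_6(s)$: their difference is $2s$ and their sum is $2(s^2+1)$, so $q\mid\gcd(\Phi_3(s),\Phi_6(s))$ would force $q\mid1$; hence exactly one of the two cases occurs, and the relevant quotient $\Phi_3(s)/q$ or $\Phi_6(s)/q$ is a genuine integer. If $q\mid\Phi_3(s)$, I pull $q$ out of $\Phi_3(qx+s)$, so $\Phi_3(qx+s) = q\,f_1(x)$ with $f_1(x) = qx^2+(2s+1)x+\Phi_3(s)/q$, while $\Phi_6(qx+s) = f_2(x)$ already has the claimed shape; this gives $\Phi_6(\Phi_4(qx+s)) = q\,f_1(x)f_2(x)$. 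Symmetrically, if $q\mid\Phi_6(s)$, pulling $q$ out of $\Phi_6(qx+s)$ gives $q\,f_4(x)$ while $\Phi_3(qx+s)=f_3(x)$, yielding $q\,f_3(x)f_4(x)$.

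Finally I would establish irreducibility over $\mathbb{Z}$. Each $f_i$ is a quadratic, and a short computation gives the discriminants $\operatorname{disc}(f_1) = (2s+1)^2-4\Phi_3(s) = -3$, $\operatorname{disc}(f_4) = (2s-1)^2-4\Phi_6(s) = -3$, and $\operatorname{disc}(f_2) = \operatorname{disc}(f_3) = -3q^2$. Being negative, no $f_i$ has a real root, hence none has a rational root, so each is irreducible over $\mathbb{Q}$. To upgrade this to $\mathbb{Z}$ it suffices to check primitivity: a common prime factor of the coefficients of $f_i$ must divide the leading coefficient and so equals $q$; for $f_2,f_3$ this is impossible since $q$ does not divide both $\Phi_3(s)$ and $\Phi_6(s)$, and for $f_1,f_4$ it forces $q\mid 2s\pm1$ and hence $q\mid 3$, i.e.\ $q=3$, a case settled by inspecting the finitely many admissible values of $s$. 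The computations here are entirely routine; the only points needing care are the bookkeeping of which factor absorbs the constant $q$ in each case and the primitivity check that turns irreducibility over $\mathbb{Q}$ into irreducibility over $\mathbb{Z}$ — I do not expect any genuine obstacle beyond these.
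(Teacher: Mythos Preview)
Your proof is correct and follows essentially the same route as the paper: establish the identity $\Phi_6(\Phi_4(T))=\Phi_3(T)\Phi_6(T)$, substitute $T=qx+s$, split into cases according to which factor absorbs the $q$, and verify irreducibility via the negative discriminants $-3$ and $-3q^2$. Your argument is in fact slightly more thorough than the paper's, which stops at the discriminant computation; your extra observations that $q$ cannot divide both $\Phi_3(s)$ and $\Phi_6(s)$ and that primitivity must be checked to pass from irreducibility over $\mathbb{Q}$ to irreducibility over $\mathbb{Z}$ are welcome additions.
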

\begin{proof}
If $q \equiv 1 \pmod 6$, then $-3$ is a quadratic residue $\pmod q$, and a root of $\Phi_j(x) \pmod q$ can be computed, $j=3,6$. It is easily seen that $\Phi_3(1)\equiv \Phi_6(2)\equiv 0 \pmod 3$. A trivial verification shows that,
\begin{align}\label{rownanie64}
\Phi_6(\Phi_4(x))=\Phi_3(x)\Phi_6(x), \quad x \in \mathbb{Z}.
\end{align}
Let $s$ be a root of $\Phi_k(x) \pmod q$, $k=3$ or $k=6$. From (\ref{rownanie64}) it follows that,
\begin{align*}
\Phi_6( \Phi_4(qx+s)) = \Phi_3(qx+s)\Phi_6(qx+s) = \left\{
\begin{array}{ccc}
qf_1(x)f_2(x), & \quad  \mbox{if}  & \quad q \mid \Phi_3(s),  \\
qf_3(x)f_4(x), & \quad  \mbox{if}  & \quad q \mid \Phi_6(s), 
\end{array}
\right .
\end{align*}
where
\begin{align*}
f_1(x) & =qx^2+(2s+1)x+\Phi_3(s)/q, \quad f_2(x)  =q^2x^2+(2s-1)qx+\Phi_6(s),\\
f_3(x) & =q^2x^2+(2s+1)qx+\Phi_3(s), \quad f_4(x)  =qx^2+(2s-1)x+\Phi_6(s)/q.
\end{align*}
The polynomials $f_i$ are irreducible over $\mathbb{Z}$, $i=1,2,3,4$. Indeed, $\Delta(f_i)$ the discriminants of $f_i$  are negative,
\begin{align*}
\Delta(f_1) & =(2s+1)^2-4\Phi_3(s)=-3, \quad \Delta(f_2)   =q^2((2s-1)^2-4\Phi_6(s))=-3q^2,\\
\Delta(f_3) & =q^2((2s+1)^2-4\Phi_3(s))=-3q^2, \quad \Delta(f_4)   =(2s-1)^2-4\Phi_6(s)=-3.
\end{align*}
This finishes the proof.
\end{proof}
We are now in a position to prove Theorem \ref{th1}.
\begin{proof}
Let $q \equiv 1 \pmod 6$ be a prime or $q=3$, and let $\Phi_3(s)\equiv 0 \pmod q$. We will show that polynomials $n_6(x), p_6(x)$ and $t_6(x)$ satisfy the conditions (\ref{familyMNT}).
 We have,
\begin{align*}
qn_6(x)& =q^2x^2+(2s+1)qx+\Phi_3(s)=\Phi_4(qx+s)+qx+s\\
& = p_6(x)+ 1 - t_6(x),
\end{align*}
so $qn_6(x) \mid p_6(x)+1-t_6(x)$. An easy computation shows that,
\begin{align}\label{CMmoment}
t_6(x)^2-4p_6(x) & =-3(qx+s)^2-2(qx+s)-3<0.
\end{align}
Since $n_6(x)=f_1(x)$, (\ref{podzielnosc6}) shows that
\begin{align*}
qn_6(x) \mid \Phi_6(p_6(x)).
\end{align*}
The polynomials $n_6(x)$ and $p_6(x)$ are irreducible over $\mathbb{Z}$, which is clear from Lemma~\ref{phi6phi4} and  is easy to check. So the polynomials  $n_6(x), p_6(x)$ and  $t_6(x)$ satisfy Definition \ref{DefGenMNT}.
Fix $x$ for the moment. We can write (\ref{CMmoment}) in the form
\begin{align*}
t_6(x)^2-4p_6(x) & =-3(qx+s)^2-2(qx+s)-3 = \Delta Y^2, \quad Y \in \mathbb{Z},
\end{align*}
where $\Delta <0$  is a square-free integer. Multiplying the above equation by -3, we obtain
\begin{align*}
X^2+3\Delta Y^2=-8, \quad X=3(qx+s)+1.
\end{align*}
The same proof works if $\Phi_6(s)\equiv 0 \pmod q$.  The details are left to the reader. This finishes the proof.
\end{proof}

\subsection{The case  $k=4$}\label{Section4}
\begin{lemma}\label{phi4phi6}
Fix  a prime $q \equiv 1 \pmod 4$ or $q=2$. If $\Phi_4(s) \equiv 0 \pmod q$ or $\Phi_4(s-1) \equiv 0 \pmod q$. Then we have
\begin{align*}
\left\{
\begin{array}{ccl}
\Phi_4( \Phi_6(qx+s))  =   qf_5(x)f_6(x), & \quad  \mbox{if}  & \quad q \mid \Phi_4(s),  \\
\Phi_4( \Phi_6(qx+s))  =   qf_7(x)f_8(x), & \quad  \mbox{if}  & \quad q \mid \Phi_4(s-1), 
\end{array}
\right .
\end{align*}
for  $x \in \mathbb{Z} $, and polynomials $f_i(x) \in \mathbb{Z}[x]$ are  irreducible over  $\mathbb{Z}$, $i=5,6,7,8$, where
\begin{align*}
f_5(x) & =qx^2+2sx+\Phi_4(s)/q, \quad f_6(x)  =q^2x^2+(2s-2)qx+\Phi_4(s-1),\\
f_7(x) & =q^2x^2+2sqx+\Phi_4(s), \quad f_8(x)  =qx^2+(2s-2)x+\Phi_4(s-1)/q.
\end{align*}
\end{lemma}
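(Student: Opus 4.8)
The plan is to mirror the proof of Lemma \ref{phi6phi4} almost verbatim, with $\Phi_4$ and $\Phi_6$ interchanged and the condition $q\equiv 1\pmod 4$ replacing $q\equiv 1\pmod 6$. First I would record the solvability of the relevant congruences: for a prime $q\equiv 1\pmod 4$ the element $-1$ is a quadratic residue modulo $q$, so $\Phi_4(x)=x^2+1$ splits modulo $q$ and a root $s<q$ can be computed; for the exceptional modulus $q=2$ one checks directly that $\Phi_4(1)=2\equiv 0\pmod 2$. The same remark applies to $\Phi_4(x-1)$, whose roots modulo $q$ are just the roots of $\Phi_4$ shifted by $1$, which explains why both alternatives $q\mid\Phi_4(s)$ and $q\mid\Phi_4(s-1)$ occur in the statement.

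The arithmetic core is the polynomial identity
\begin{align*}
\Phi_4(\Phi_6(x))=(x^2-x+1)^2+1=(x^2+1)(x^2-2x+2)=\Phi_4(x)\,\Phi_4(x-1),\qquad x\in\mathbb{Z},
\end{align*}
which plays the role that (\ref{rownanie64}) plays in the $k=6$ case and is verified by a one-line expansion. Substituting $x\mapsto qx+s$ gives $\Phi_4(\Phi_6(qx+s))=\Phi_4(qx+s)\,\Phi_4(qx+s-1)$, where $\Phi_4(qx+s)=q^2x^2+2sqx+\Phi_4(s)$ and $\Phi_4(qx+s-1)=q^2x^2+(2s-2)qx+\Phi_4(s-1)$. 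In the first case $q\mid\Phi_4(s)$, so one factor of $q$ is pulled out of $\Phi_4(qx+s)$, yielding $\Phi_4(\Phi_6(qx+s))=qf_5(x)f_6(x)$ with $f_5,f_6$ as displayed; in the second case $q\mid\Phi_4(s-1)$, so the factor of $q$ comes instead out of $\Phi_4(qx+s-1)$, giving $\Phi_4(\Phi_6(qx+s))=qf_7(x)f_8(x)$. The complementary quadratic in each case has integer coefficients automatically, so no case distinction beyond the two in the statement is needed.

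It then remains to prove irreducibility of $f_5,f_6,f_7,f_8$ over $\mathbb{Z}$; since each is quadratic it suffices, exactly as in Lemma \ref{phi6phi4}, to exhibit a negative discriminant. A direct computation gives $\Delta(f_5)=4s^2-4\Phi_4(s)=-4$ and $\Delta(f_8)=(2s-2)^2-4\Phi_4(s-1)=-4$, while $\Delta(f_6)=q^2\bigl((2s-2)^2-4\Phi_4(s-1)\bigr)=-4q^2$ and $\Delta(f_7)=q^2\bigl(4s^2-4\Phi_4(s)\bigr)=-4q^2$; all four are strictly negative, so each $f_i$ is without rational root and hence irreducible. There is no genuine obstacle in this lemma — it is a routine verification parallel to the $k=6$ case — the only point that needs a moment's care is checking that pulling out exactly one factor of $q$ in each case leaves integer coefficients, which is precisely the divisibility hypothesis imposed on $s$.
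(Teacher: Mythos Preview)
Your proposal is correct and follows essentially the same route as the paper: establish the identity $\Phi_4(\Phi_6(x))=\Phi_4(x)\Phi_4(x-1)$, substitute $x\mapsto qx+s$, pull out one factor of $q$ according to which divisibility hypothesis holds, and verify irreducibility via the negative discriminants $-4$ and $-4q^2$. Your treatment is in fact slightly cleaner than the paper's (you correctly handle the exceptional case $q=2$ via $\Phi_4(1)\equiv 0\pmod 2$, whereas the paper's corresponding line contains a typo).
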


\begin{proof}
If $q \equiv 1 \pmod 4$, then $-1$ is a quadratic residue $\pmod q$, and a root of $\Phi_4(x) \pmod q$ can be computed. It is easily seen that $\Phi_2(1)\equiv 0 \pmod 3$. A trivial verification shows that,
\begin{align}\label{rownanie46}
\Phi_4(\Phi_6(x))=\Phi_4(x)\Phi_4(x-1), \quad x \in \mathbb{Z}.
\end{align}
Let $s$ be a root of  $\Phi_4(x) \pmod q$ or let $\Phi_4(s-1)\equiv 0 \pmod q$. From (\ref{rownanie46}) it follows that,
\begin{align}\label{podzielnosc4}
\Phi_4( \Phi_6(qx+s)) = \left\{
\begin{array}{ccc}
qf_5(x)f_6(x), & \quad  \mbox{if}  & \quad q \mid \Phi_4(s),  \\
qf_7(x)f_8(x), & \quad  \mbox{if}  & \quad q \mid \Phi_4(s-1), 
\end{array}
\right .
\end{align}
where
\begin{align*}
f_5(x) & =qx^2+2sx+\Phi_4(s)/q, \quad f_6(x)  =q^2x^2+(2s-2)qx+\Phi_4(s-1),\\
f_7(x) & =q^2x^2+2sqx+\Phi_4(s), \quad f_8(x)  =qx^2+(2s-2)x+\Phi_4(s-1)/q.
\end{align*}
The polynomials $f_i$ are irreducible over $\mathbb{Z}$, $i=5,6,7,8$. Indeed, $\Delta(f_i)$ the discriminants of $f_i$  are negative,
\begin{align*}
\Delta(f_5) & =4s^2-4\Phi_4(s)=-4, \quad \Delta(f_6)   =q^2((2s-2)^2-4\Phi_4(s-1))=-4q^2,\\
\Delta(f_7) & =q^2(4s^2-4\Phi_4(s))=-4q^2, \quad \Delta(f_8)   =(2s-2)^2-4\Phi_4(s-1)=-4.
\end{align*}
This finishes the proof.
\end{proof}
We are now in a position to prove Theorem \ref{th2}.
\begin{proof}
Fix  a prime $q \equiv 1 \pmod 4$ or $q=2$, and let $\Phi_4(s) \equiv 0 \pmod q$.
We will show that polynomials $n_4(x), p_4(x)$ and $t_4(x)$ satisfy the conditions (\ref{familyMNT}).
We have,
\begin{align*}
qn_4(x)& =q^2x^2+2sqx+\Phi_4(s)=(qx+s)^2+1\\
& = \Phi_6(qx+s)+(qx+s)=p_4(x)+ 1 - t_4(x)
\end{align*}
so $qn_4(x) \mid p_4(x)+1-t_4(x)$. A trivial verification shows that
\begin{align}\label{CMmoment4}
t_4(x)^2-4p_4(x) & =-3(qx+s)^2-2(qx+s)-3<0.
\end{align}
Since $n_4(x)=f_5(x)$, (\ref{podzielnosc4}) shows that
\begin{align*}
qn_4(x) \mid \Phi_4(p_4(x)).
\end{align*}
The polynomials $n_4(x)$ and $p_4(x)$ are irreducible over $\mathbb{Z}$, which is clear from Lemma~\ref{phi4phi6} and  is easy to check. So the polynomials  $n_4(x), p_4(x)$ and  $t_4(x)$ satisfy Definition \ref{DefGenMNT}.
Fix $x$ for the moment. We can write (\ref{CMmoment4}) in the form
\begin{align*}
t_4(x)^2-4p_4(x) & =-3(qx+s)^2-2(qx+s)-3=\Delta Y^3, \quad Y \in \mathbb{Z},
\end{align*}
where $\Delta <0$  is a square-free integer. Multiplying the above equation by -3, we obtain
\begin{align*}
X^2+3\Delta Y^2=-8,\quad X=3(qx+s)+1
\end{align*}
The same proof works if $\Phi_4(s-1) \equiv 0 \pmod q$.  The details are left to the reader. This finishes the proof.
\end{proof}

\subsection{The case  $k=3$}

\begin{lemma}\label{phi3phi?}
 Let $g_0(x)=3x^2-1$, $g_1(x) = 3x^2-3x+1$ and $g_2(x) = 3x^2+3x+1 \in \mathbb{Z}[x]$. Fix  a prime $q \equiv 1 \pmod 6$, and let $g_1(s) \equiv 0 \pmod q$ or $g_2(s) \equiv 0 \pmod q$. Then we have,
\begin{align}\label{podzielnosc3}
\left\{
\begin{array}{ccl}
\Phi_3(g_0(qx+s))  =   qf_9(x)f_{10}(x), & \quad  \mbox{if}  & \quad q \mid g_1(s),  \\
\Phi_3( g_0(qx+s))  =   qf_{11}(x)f_{12}(x), & \quad  \mbox{if}  & \quad q \mid g_2(s)  
\end{array}
\right .
\end{align}
$x \in \mathbb{Z}$ and polynomials $f_i(x) \in \mathbb{Z}[x]$ are  irreducible over  $\mathbb{Z}$, $i=9,10,11, 12$, where
\begin{align*}
f_9(x) & =3qx^2+(6s-3)x+g_1(s)/q, \quad f_{10}(x)  =3q^2x^2+(6s+3)qx+g_2(s),\\
f_{11}(x) & =3q^2x^2+(6s-3)qx+g_1(s), \quad f_{12}(x)  =3qx^2+(6s+3)x+g_2(s)/q.
\end{align*}
\end{lemma}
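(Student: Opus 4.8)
The plan is to mimic exactly the structure of Lemmas \ref{phi6phi4} and \ref{phi4phi6}. The starting point should be an algebraic identity expressing $\Phi_3(g_0(x))$ as a product of two quadratics in $\mathbb{Z}[x]$. Since $g_0(x)=3x^2-1$, we have $\Phi_3(g_0(x)) = (3x^2-1)^2 + (3x^2-1) + 1 = 9x^4 - 3x^2 + 1$, and one checks by direct expansion that this factors as $g_1(x)\,g_2(x) = (3x^2-3x+1)(3x^2+3x+1)$. This is the analogue of identities (\ref{rownanie64}) and (\ref{rownanie46}), and it is the one genuinely ``creative'' ingredient; everything else is bookkeeping. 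I would state it as a displayed identity right at the top of the proof.

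Next I would substitute $x \mapsto qx+s$ into $\Phi_3(g_0(qx+s)) = g_1(qx+s)\,g_2(qx+s)$ and expand each factor as a polynomial in $x$. Explicitly $g_1(qx+s) = 3q^2x^2 + (6s-3)qx + g_1(s)$ and $g_2(qx+s) = 3q^2x^2 + (6s+3)qx + g_2(s)$. Now the hypothesis splits into the two cases. If $q \mid g_1(s)$, then the constant term of $g_1(qx+s)$ is divisible by $q$, and since the leading and middle coefficients $3q^2$ and $(6s-3)q$ are also divisible by $q$, we may extract a factor of $q$ from $g_1(qx+s)$, writing $g_1(qx+s) = q\bigl(3qx^2 + (6s-3)x + g_1(s)/q\bigr) = q f_9(x)$; the other factor $g_2(qx+s)$ is then $f_{10}(x)$, giving $\Phi_3(g_0(qx+s)) = q f_9(x) f_{10}(x)$. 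Symmetrically, if $q \mid g_2(s)$ one extracts the $q$ from the second factor, obtaining $q f_{11}(x) f_{12}(x)$. (One should remark, as in the earlier lemmas, that for $q \equiv 1 \pmod 6$ a root $s$ of $g_1$ or $g_2$ modulo $q$ exists precisely because $-3$ is a quadratic residue mod $q$, so the hypothesis is non-vacuous.)

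Finally I would verify irreducibility of $f_9, f_{10}, f_{11}, f_{12}$ over $\mathbb{Z}$ exactly as before, by computing discriminants and checking they are negative. For $f_9$: $\Delta(f_9) = (6s-3)^2 - 4\cdot 3q \cdot g_1(s)/q = (6s-3)^2 - 12 g_1(s) = 9(2s-1)^2 - 12(3s^2-3s+1) = -3$. For $f_{10}$: $\Delta(f_{10}) = (6s+3)^2 q^2 - 12 q^2 g_2(s) = q^2\bigl(9(2s+1)^2 - 12(3s^2+3s+1)\bigr) = -3q^2$. By the obvious symmetry $\Delta(f_{11}) = -3q^2$ and $\Delta(f_{12}) = -3$. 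Since all discriminants are strictly negative, each $f_i$ is irreducible over $\mathbb{Z}$ (a quadratic with negative discriminant has no real roots, hence no rational roots, and being primitive it cannot factor over $\mathbb{Z}$). This finishes the proof.

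I do not expect a serious obstacle: the only step requiring any thought is spotting the factorization $\Phi_3(3x^2-1) = (3x^2-3x+1)(3x^2+3x+1)$, and even that is forced once one notes that $\Phi_3(g_0(x))$ must be a quartic that is a product of the two ``near-MNT'' quadratics appearing in Table \ref{MNT}. The rest is the same two-line divisibility extraction and discriminant computation carried out twice in the $k=6$ and $k=4$ cases, so I would present it compactly and, following the paper's own style, could even leave the fully symmetric second case ($q \mid g_2(s)$) to the reader.
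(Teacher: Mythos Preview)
Your proposal is correct and follows essentially the same approach as the paper: establish the identity $\Phi_3(g_0(x)) = g_1(x)\,g_2(x)$, substitute $x \mapsto qx+s$, extract the factor $q$ from whichever $g_j(qx+s)$ has constant term divisible by $q$, and verify irreducibility via the discriminant computations $\Delta \in \{-3,\,-3q^2\}$. The paper's proof is slightly terser (it does not expand $9x^4-3x^2+1$ or mention primitivity), but the logical skeleton is identical.
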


\begin{proof}
If $q \equiv 1 \pmod 6$, then $-3$ is a quadratic residue $\pmod q$, and a root of $g_j(x) \pmod q$ can be computed, $j=1,2$. A trivial verification shows that,
\begin{align}\label{rownanie3?}
\Phi_3( g_0(x)) = g_1(x)g_2(x), \quad x \in \mathbb{Z}. 
\end{align}
Let $s$ be a root of $g_k(x) \pmod q$, $j=1$ or $j=2$. From (\ref{rownanie3?}) it follows that,
\begin{align*}
\left\{
\begin{array}{ccl}
\Phi_3(g_0(qx+s))  =   qf_9(x)f_{10}(x), & \quad  \mbox{if}  & \quad q \mid g_1(s),  \\
\Phi_3( g_0(qx+s))  =   qf_{11}(x)f_{12}(x), & \quad  \mbox{if}  & \quad q \mid g_2(s),  
\end{array}
\right .
\end{align*}
where
\begin{align*}
f_9(x) & =3qx^2+(6s-3)x+g_1(s)/q, \quad f_{10}(x)  =3q^2x^2+(6s+3)qx+g_2(s),\\
f_{11}(x) & =3q^2x^2+(6s-3)qx+g_1(s), \quad f_{12}(x)  =3qx^2+(6s+3)x+g_2(s)/q.
\end{align*}
The polynomials $f_i(x)$ are  irreducible over  $\mathbb{Z}$, $i=9,10,11, 12$. Indeed, $\Delta(f_i)$ the discriminants of $f_i$  are negative,
\begin{align*}
\Delta(f_9) & =(6s-3)^2-12g_1(s)=-3, \quad \Delta(f_{10})   =q^2((6s+3)^2-12g_2(s))=-3q^2,\\
\Delta(f_{11}) & =q^2((6s-3)^2-12q_1(s))=-3q^2, \quad \Delta(f_{12})   =(6s+3)^2-12g_2(s)=-3.
\end{align*}
This finishes the proof.
\end{proof}
We are now in a position to prove Theorem \ref{th3}.
\begin{proof}
Let $q \equiv 1 \pmod 6$ be a prime, and let $g_1(s)\equiv 0 \pmod q$. We will show that polynomials $n_3(x), p_3(x)$ and $t_3(x)$ satisfy the conditions (\ref{familyMNT}).
 We have,
\begin{align*}
qn_3(x)& =3q^2x^2+(6s-3)qx+g_1(s)=p_3(qx+s)-3(qx+s)-\\
& = p_3(x)+ 1 - t_3(x)
\end{align*}
so $qn_3(x) \mid p_3(x)+1-t_3(x)$. An easy computation shows that,
\begin{align}\label{CMmoment3}
t_3(x)^2-4p_3(x) & =-3(qx+s)^2-6(qx+s)+5<0.
\end{align}
Since $n_3(x)=f_9(x)$, (\ref{podzielnosc3}) shows that
\begin{align*}
qn_3(x) \mid \Phi_3(p_3(x)).
\end{align*}
The polynomials $n_3(x)$ and $p_3(x)$ are irreducible over $\mathbb{Z}$, which is clear from Lemma~\ref{phi3phi?} and  is easy to check. So the polynomials  $n_3(x), p_3(x)$ and  $t_3(x)$ satisfy Definition \ref{DefGenMNT}.
Fix $x$ for the moment. We can write (\ref{CMmoment3}) in the form
\begin{align*}
t_3(x)^2-4p_3(x) & =-3(qx+s)^2-6(qx+s)+5 = \Delta Y^2, \quad Y \in \mathbb{Z},
\end{align*}
where $\Delta <0$  is a square-free integer. Multiplying the above equation by -3, we obtain
\begin{align*}
X^2+3\Delta Y^2=24, \quad X=3(qx+s)+3.
\end{align*}
The same proof works for  $g_2(s)\equiv 0 \pmod q$.  The details are left to the reader. This finishes the proof.
\end{proof}

\bibliographystyle{fundam}
\bibliography{citations}

\end{document}